\newtheorem{theorem}{Theorem}[section]
\title{{Optimal Storage Arbitrage under Net Metering using Linear Programming}}
\author{\IEEEauthorblockN{Md Umar Hashmi\IEEEauthorrefmark{1},
		Arpan Mukhopadhyay\IEEEauthorrefmark{2},
		Ana Bu\v{s}i\'c\IEEEauthorrefmark{1},
		Jocelyne Elias\IEEEauthorrefmark{3}, 
		and
		Diego Kiedanski\IEEEauthorrefmark{4}}
	\IEEEauthorblockA{\IEEEauthorrefmark{1} INRIA and the Computer Science Dept. of Ecole Normale Sup\'erieure, CNRS, PSL Research University,
		Paris, France}
	\IEEEauthorblockA{\IEEEauthorrefmark{2} the Department of Computer Science, University of Warwick, the UK}
	\IEEEauthorblockA{\IEEEauthorrefmark{3} Laboratoire d'Informatique PAris Descartes (LIPADE), Universit\'e Paris Descartes, Paris, France}
	\IEEEauthorblockA{\IEEEauthorrefmark{4}T\'el\'ecom Paristech, 23 Avenue d'Italie, Paris, France}
}
\begin{document}
	\maketitle
	\begin{abstract}
		We formulate the optimal energy arbitrage problem for a piecewise linear cost function for energy storage devices using linear programming (LP). 
		The LP formulation is based on the equivalent minimization of the epigraph.
		This formulation considers ramping and capacity constraints, charging and discharging efficiency losses of the storage, inelastic consumer load and local renewable generation in presence of net-metering which facilitates selling of energy to the grid and incentivizes consumers to install renewable generation and energy storage.
		We consider the case where the consumer loads, electricity prices, and renewable generations at different instances are uncertain. These uncertain quantities are predicted using an Auto-Regressive Moving Average (ARMA) model and used in a model predictive control (MPC) framework to obtain the arbitrage decision at each instance.  
		In numerical results we present the sensitivity analysis of storage performing arbitrage with varying ramping batteries and different ratio of selling and buying price of electricity. 
	\end{abstract}
	
	\vspace{2mm}
	\begin{IEEEkeywords}
		Energy arbitrage, Battery, Linear programming, Net-metering, Model Predictive Control
	\end{IEEEkeywords}
	
\section{Introduction}
%
Energy storage devices provide flexibility to alter the consumption behavior of an electricity consumer. 
Storage owners at the consumer side could participate in demand response, energy arbitrage, peak demand shaving, power backup to name a few \cite{xi2014stochastic}, \cite{hashmi2019energy}.
These features
of storage devices will be more lucrative for storage owners
with the growth of intermittent generation sources which
increase volatility on the generation side in power network \cite{hashmi2018effect}.
%
%
Furthermore, batteries are becoming more affordable making several applications of storage devices financially viable.
%
Storage devices can perform arbitrage of energy with time varying consumer load, distributed generation production and electricity price. Furthermore, utilities promote inclusion of distributed generation and storage deployment by introducing net-metering. Net energy metering (NEM) or net-metering refers to the rate consumers receive for feeding power back to the grid. Most NEM policies indicate that consumers receive a rate at best equal to the buying price of electricity \cite{wiki_net}.
%
Authors in \cite{hashmi2017optimal} consider storage operation under equal buy and sell price case. This framework is generalized in \cite{hashmi2018netmetering}, covering cases where the ratio of buy and sell price could arbitrarily vary between 0 and 1.
For equal buying and selling price, the storage control becomes independent of inelastic load and renewable generation of the consumer \cite{hashmi2017optimal}, \cite{xu2017optimal}.
 The cost function considered in this work includes inelastic load, renewable generation and storage charging and discharging efficiency, and ramping and capacity constraints.
We first show that the cost function, based on the selection of the optimization variable, is convex and piecewise linear. Then, we formulate the optimal arbitrage problem for an electricity consumer with renewable generation adopting NEM by using Linear Programming (LP).

Authors in \cite{zidar2016review} provide a summary of storage control methodologies used in power distribution networks among which LP based formulations can be solved efficiently using commercially available solvers.
Therefore, these algorithms can be used to efficiently solve the arbitrage problem for the duration of a day  divided into smaller time steps ranging from 5 minutes to an hour.
A day is the typical time horizon over which arbitrage is performed \cite{mokrian2006stochastic,hu2010optimal}.

{Authors in \cite{cruise2019control} formulate the optimal arbitrage problem for a strictly convex cost function and observe that for a piecewise linear convex cost function, as in \cite{hashmi2017optimal}, LP-based tools can be applied. }
LP techniques for energy storage arbitrage have been used in several prior works: \cite{park2017linear}, \cite{byrne2015potential}, \cite{chouhan2016optimization}, \cite{thatte2013risk}, \cite{bradbury2014economic}, \cite{nguyen2018maximizing}, \cite{wang2018energy}.
Authors in \cite{bradbury2014economic, byrne2015potential, wang2018energy} consider storage operation in presence of time-varying electricity price.
However, in these formulations no renewable energy source or consumer load is assumed to be present.
 Authors in \cite{chouhan2016optimization, thatte2013risk} consider optimal scheduling of storage battery for maximizing energy arbitrage revenue in presence of distributed energy resources and variable electricity price. 
Formulations presented in \cite{nguyen2018maximizing, park2017linear} consider storage performing arbitrage in a residential setting with inelastic load and local generation.
Most common
LP formulations for energy arbitrage such as in \cite{park2017linear}, \cite{wang2018energy}, \cite{thatte2013risk}, \cite{byrne2015potential} consider separation of charging and discharging
components {of storage ramping variable}.
In these formulations, they do not include constraint
enforcing only one of the charging or the discharging component to be active at any particular time as the inclusion of such a constraint makes these formulations nonlinear. 
{In the absence of such a constraint charging and discharging component can compensate each other which can lead to suboptimal solution.
	 }
Authors in \cite{chouhan2016optimization, bradbury2014economic} do not consider energy storage charging and discharging efficiencies in the cost minimization, making it straightforward to apply LP.
Authors in \cite{nguyen2018maximizing} consider a special case of optimization with zero-sum aggregate storage power output. For such a case LP tools could be used, however, generalizing the formulations needs to be explored further.


The key contributions of this paper are as follows:\\
$\quad \bullet$ \textit{LP formulation for storage control}: 
%
We formulate the LP optimization problem for piecewise linear convex cost function, for storage with efficiency losses, ramping and capacity constraints and a consumer with inelastic load and renewable generation. The buying and selling price of electricity are varying over time. The selling price is assumed to be at best equal to buying price for each time instant, this assumption is in sync with most net-metering policies worldwide.
%
%
Based on the convex and piecewise linear structure of the cost function we apply an epigraph based minimization described in \cite{boyd2004convex} to the arbitrage problem.
The reduction of this formulation for 
 (a) lossless battery with equal buying and selling price of electricity and (b) lossy battery with selling price less than or equal to buying price, is trivial and not included in this paper.  \\
$\quad \bullet$ \textit{Real-time implementation:} We implement an auto-regressive based forecast model along with model predictive control and 
numerically analyze their effect on arbitrage gains using 
real data from a household in Madeira in Portugal and electricity price from California ISO \cite{ENOnline}. The effect of parameter uncertainty on arbitrage gains is more pronounced for cases where selling price is comparable to buying price.\\
$\quad \bullet$ \textit{Sensitivity of ratio of selling and buying price}: We numerically analyze the effect of the ratio of buying and selling price of electricity on the value of storage with inelastic load and renewable generation. We observe that the value of storage performing arbitrage significantly increases in the presence of load and renewable generation with the increasing {difference} of selling and buying price of electricity, compared to only storage performing arbitrage.
Inclusion of storage in the presence of load and renewable generation can be profitable even for cases where the selling price is zero or small compared to buying price. For the same case, only storage performing arbitrage would not be profitable.

%

The paper is organized as follows. Section~\ref{lpsec2} provides the description of the system. 
Section~\ref{lpsec3} presents the LP formulation of storage performing arbitrage with inelastic load, renewable generation and net-metering based compensation.
Section~\ref{lpsec4} presents an online algorithm using the proposed optimal arbitrage algorithm along with auto-regressive forecasting in the  MPC framework. 
Section~\ref{lpsec5} discusses numerical results. 
Finally, Section~\ref{lpsec6} concludes the paper.

\section{System Description}
\label{lpsec2}
We consider a consumer of electricity over a fixed period of time. 
The consumer is assumed to be equipped with a rooftop solar photovoltaic and a battery to store excess generation. It is also connected to the electricity grid
from where it can buy or to which it can sell energy.
The total duration, $T$, 
of operation is divided into $N$ steps indexed by $\{1,...,N\}$. 
The duration of step $i \in \{1,...,N\}$ is denoted as $h_i$. Hence, $T=\sum_{i=1}^{N} h_i$.
The price of electricity, $p_{\text{elec}}(i)$, equals the buying price, $p_b(i)$, if the consumption is positive; otherwise $p_{\text{elec}}(i)$ equals the selling price, $p_s(i)$; denoted as \vspace{-5pt}
\begin{equation}
p_{\text{elec}}(i)=
\begin{cases}
p_b(i) ,& \text{if consumption } \geq 0 ,\\
p_s(i) , & \text{otherwise,} \vspace{-5pt}
\end{cases}
\end{equation} 
Note $p_{\text{elec}}$ is ex-ante and the consumer is a price taker.
The ratio of selling and buying price at time $i$ is denoted as  \vspace{-5pt}
\begin{equation}
\kappa_i = {p_s(i)}/{p_b(i)}.  \vspace{-7pt}
\end{equation}
The end user inelastic consumption in time step $i$  is denoted as $d_i$ and
renewable generation as $r_i$.
Net energy consumption without storage is denoted as
$
z_i = d_i - r_i ~ \in \mathbb{R}.
$
Fig.~\ref{systemblock} shows the block diagram of the system, i.e., an electricity consumer with renewable generation and battery.
The efficiency of charging and discharging of the 
battery are denoted by $\eta_{\text{ch}}, \eta_{\text{dis}} \in (0,1]$, respectively. 
We denote the change in the energy level of the battery at $i^{\text{th}}$ instant by $x_i= h_i  \delta_i$,
where $\delta_i$ denotes the storage ramp rate at $i^{\text{th}}$ instant such that $\delta_i \in [\delta_{\min}, \delta_{max}]$ $\forall i$ and $\delta_{\min} \leq 0,\delta_{\max} \geq 0$ are the minimum and the maximum ramp rates (kW); 
$\delta_i > 0$ implies charging and $\delta_i < 0$ implies discharging. 
Energy consumed by the storage in the $i^{th}$ instant is given by  \vspace{-3pt}
\begin{equation}
s_i =f(x_i)= \frac{1}{\eta_{\text{ch}}}[x_i]^+ - \eta_{\text{dis}}[x_i]^-, \vspace{-3pt}
\end{equation}
%
where $x_i$ must lie in the range  from $X_{\min}^i=\delta_{\min}h_i$ to $X_{\max}^i= {\delta_{\max}h_i}$.
Note $[x_i]^+ = \max(0, x_i)$ and $[x_i]^- = \max(0, -x_i)$.
Alternatively, we can write $x_i =  \eta_{\text{ch}}[s_i]^+ - \frac{1}{\eta_{\text{dis}}}[s_i]^-$.
The limits on $s_i$  are given as $ s_i \in [S_{\min}^i, S_{\max}^i]$, where $S_{\min}^i=\eta_{\text{dis}}\delta_{\min}h_i$ and $S_{\max}^i= \frac{\delta_{\max}h_i}{\eta_{\text{ch}}}$.

Let $b_i$ denote the energy stored in the battery at the $i^{\textrm{th}}$ step. 
The battery capacity is defined as \vspace{-3pt}
\begin{equation}
b_i = b_{i-1} + x_i, \quad b_i\in [b_{\min},b_{\max}],  \forall i, \vspace{-3pt}
\end{equation}
where $b_{\min}, b_{\max}$ are the minimum and the maximum battery capacity. 
The total energy consumed between time step $i$ and $i+1$ is given as $L_i = z_i+s_i$. \vspace{-9pt}
\begin{figure}[!htbp]
	\center
	\includegraphics[width=2.9in]{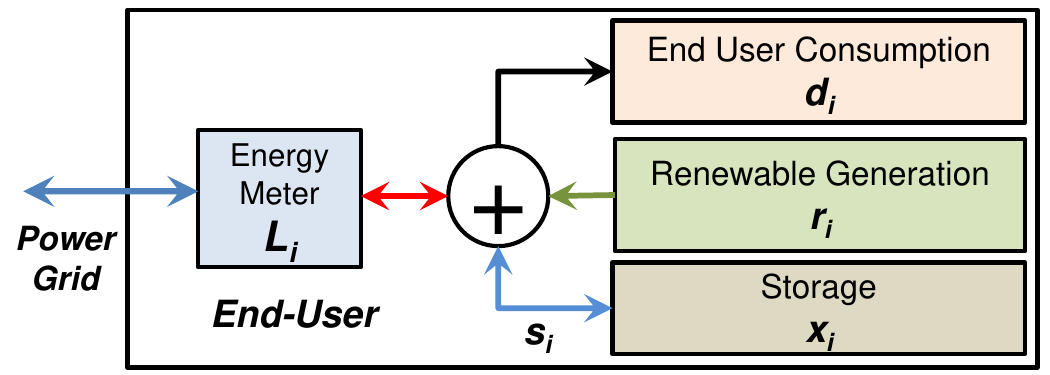} \vspace{-6pt}
	\caption{\small{Behind-the-meter electricity consumer with inelastic consumption, renewable generation and energy storage.}}\label{systemblock}
\end{figure}
\vspace{-9pt}

The battery operational life is often quantified using cycle and calendar life which decides the cycles a battery should perform over a time period.  
Friction coefficient, denoted as $\eta_{\text{fric}} \in [0,1]$, and introduced in \cite{hashmi2018limiting} assists in reducing the operational life of the battery such that low returning transactions of charging and discharging are eliminated, thus increasing the operational life of the battery.
In subsequent work, authors in \cite{hashmi2018long, hashmi2018pfcpowertech} propose a framework to tune the value of friction coefficient for increasing operational life of battery by eliminating low returning transactions.
\subsection{Arbitrage under Net-Metering}
The optimal arbitrage problem is defined as the minimization of the 
cost of total energy consumption subject to the battery constraints.
It is given as follows:
\begin{gather*}
\text{($P_{\text{NEM}}$)  } 
\min  \sum_{i=1}^N C_{nm}^{i}(x_i),
\end{gather*}
 \vspace{-9pt}
 \begin{gather*}
\vspace{2pt}\text{subject to, } 
b_{\min} - b_0\leq \sum_{j=1}^i x_j \leq b_{\max}- b_0 , \forall i \in \{1,..,N\},
\end{gather*}
\vspace{-9pt}
\begin{gather*}
x_i  \in \left[X_{\min}^i , X_{\max}^i\right]  \forall i \in \{1,..,N\},
\end{gather*}
where $C_{\text{nm}}^{i}(x_i)$ denotes the energy consumption cost function at instant $i$ and is given by
\begin{equation}
C_{\text{nm}}^{i}(x_i) = [z_i + f(x_i)]^+ p_b(i) - [z_i + f(x_i)]^- p_s(i).
\end{equation}

{Now we will show that} the optimal arbitrage problem is convex in $x=(x_i, i=1:N)$.
For this convexity to hold we require $p_b(i) \geq p_s(i)$ for all $i=1:N$, i.e., {$\kappa_i \in [0,1]$}.
The proposed framework is applicable for the case where selling price of electricity for the end user is lower than the buying price. This assumption is quite realistic as this is generally the case in most practical net metering policies \cite{wiki_net}. 
\begin{theorem}
	\label{thm:convexity}
	If $p_b(i) \geq p_s(i)$ for all $i=1:N$, then problem ($P_{\text{NEM}}$) is convex in $x$. 
\end{theorem}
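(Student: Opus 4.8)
The plan is to separate the two ingredients of a convex program: the feasible region and the objective. First I would observe that every constraint in (P) --- the capacity bounds $b_{\min}-b_0 \le \sum_{j=1}^i x_j \le b_{\max}-b_0$ and the ramp bounds $x_i \in [X_{\min}^i, X_{\max}^i]$ --- is affine in $x$, so the feasible set is a polyhedron and hence convex. It therefore remains only to show that the objective $\sum_{i=1}^N C_{\text{nm}}^i(x_i)$ is convex; since a sum of convex functions is convex and the summands depend on disjoint coordinates, it suffices to prove that each $C_{\text{nm}}^i$ is convex in the scalar $x_i$.

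Next I would establish convexity of the inner storage map $f$. Because $\eta_{\text{ch}},\eta_{\text{dis}} \in (0,1]$ we have $1/\eta_{\text{ch}} \ge 1 \ge \eta_{\text{dis}}$, so $f(x_i)=\frac{1}{\eta_{\text{ch}}}[x_i]^+ - \eta_{\text{dis}}[x_i]^-$ is piecewise linear with slope $\eta_{\text{dis}}$ for $x_i<0$ and the larger slope $1/\eta_{\text{ch}}$ for $x_i>0$; an increasing slope across the single kink at the origin gives convexity. Equivalently, I would write $f(x_i)=\max\{x_i/\eta_{\text{ch}},\,\eta_{\text{dis}}x_i\}$, exhibiting $f$ as a pointwise maximum of two linear functions, which also records that $f$ is non-decreasing. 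Then $z_i+f(x_i)$ is convex in $x_i$ as a constant plus a convex function.

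The remaining step handles the outer cost. Writing $L=z_i+f(x_i)$, the per-step cost is $\phi(L)=[L]^+p_b(i)-[L]^-p_s(i)$, which I would rewrite as $\phi(L)=\max\{p_b(i)L,\,p_s(i)L\}$: for $L\ge 0$ the hypothesis $p_b(i)\ge p_s(i)\ge 0$ makes $p_b(i)L$ the larger, and for $L<0$ it makes $p_s(i)L$ the larger, matching the definition in both regimes. As a pointwise maximum of two linear functions, $\phi$ is convex, and because $p_b(i),p_s(i)\ge 0$ (equivalently $\kappa_i\in[0,1]$) it is non-decreasing. Finally I would compose: $C_{\text{nm}}^i(x_i)=\phi(z_i+f(x_i))$ is the composition of the convex non-decreasing $\phi$ with the convex $z_i+f(x_i)$, hence convex; summing over $i$ and pairing with the polyhedral feasible set yields the claim.

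I expect the main obstacle to be this outer composition, not the affine book-keeping. A naive reading splits $C_{\text{nm}}^i$ into a linear term plus a multiple of $|z_i+f(x_i)|$, and the absolute value of a convex function is not convex in general; the two structural hypotheses are exactly what rescue the argument. The price assumption $p_b(i)\ge p_s(i)$ forces the kink of $\phi$ to bend upward at $L=0$ (convex rather than concave), while the physical bound $\eta_{\text{ch}},\eta_{\text{dis}}\le 1$ forces the kink of $f$ to bend upward at $x_i=0$, and non-negativity of prices supplies the monotonicity required by the composition rule. I would explicitly flag that dropping either hypothesis breaks convexity, to make clear that both are genuinely used.
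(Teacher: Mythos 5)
Your proof is correct and follows essentially the same route as the paper: both decompose $C_{\text{nm}}^i$ into an inner convex storage map $f$ and an outer convex, non-decreasing price function, apply the composition rule, and note the constraints are affine. The only cosmetic difference is that you verify convexity of the two-piece functions via the representation as a pointwise maximum of two linear functions (e.g.\ $\phi(L)=\max\{p_b(i)L,\,p_s(i)L\}$), whereas the paper writes $\psi(t)=a[t]^+-b[t]^-=(a-b)[t]^+ + bt$ and uses that a nonnegative combination of convex functions is convex --- both are standard and equally valid.
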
 
	\begin{proof}
		Let $\psi(t) =a[t]^+ - b[t]^-$ with $a\geq b \geq 0$. Using $t=[t]^+- [t]^-$ we have $\psi(t) = (a-b)[t]^+ + bt$. 
		Since both $[t]^+$ and $t$ are convex in $t$ and $a-b, b\geq 0$ we have that $\psi$ is convex since it is the positive sum of two convex functions. 
		
		Now let $f(x)= \frac{1}{\eta_{ch}} [x]^+ - \eta_{dis}[x]^-$ and $G_i(s) =[z_i+s]^+p_b(i) -[z_i+s]^-p_s(i)$.
		Then by the above reasoning we have that for $p_b(i) \geq p_s(i) \geq 0$ and $\eta_{ch}, \eta_{dis} \in (0,1]$,
		$G_i$ is convex in $s$ and $f$ is convex in $x$. Also, note that $G_i$ is non-decreasing in $s$. Hence, for $\lambda \in [0,1]$ we have
		\begin{align}
		G_i\big(f(\lambda x +(1-\lambda)y)\big) &\leq G_i\big(\lambda f(x) + (1-\lambda)f(y)\big)\\
		&\leq \lambda G_i(f(x)) + (1-\lambda)G_i(f(y))
		\end{align}
		In the above, the first inequality follows from the convexity of $f$ and non-decreasing nature of $G_i$
		and the second inequality follows from convexity of $G_i$. Therefore, we have that $G_i\circ f=G_i(f())$ is 
		a convex function in $x$. This shows that the objective function of ($P_{\text{NEM}}$) is convex in $x$ since $C_{nm}^i=G_i\circ f$.
		Since the constraints are linear in $x$ thus problem ($P_{\text{NEM}}$) is convex.
	\end{proof}

\section{Optimal Arbitrage with Linear Programming}
\label{lpsec3}
The optimal arbitrage problem, ($P_{\text{NEM}}$), can be solved using linear programming as the cost function is (i) convex and (ii) piecewise linear, and (iii) the associated ramping and capacity constraints are linear. 
In this section, we provide an LP formulation for the optimal arbitrage of the storage device under net-metering and consumer inelastic load and renewable generation, leveraging the epigraph based minimization presented in \cite{boyd2004convex}. 
A summary of the epigraph based formulation for a piecewise linear convex cost function is presented in Appendix~\ref{epigraphsec}.
The optimal arbitrage formulation for storage under net-metering and consumer inelastic load and renewable generation using the epigraph formulation is presented in this section.
Fig.~\ref{costlp} shows the optimal arbitrage cost function depending on the net-load without storage output, i.e. for $z_i\geq0$ and $z_i<0$. 
{Notice that the cost function $C_{nm}(i)$ is formed of 4 unique segments whose slopes, x-intercept and y-intercept are shown in Fig.~\ref{costlp} and listed in Table~\ref{segproperties}.
Fig.~\ref{costlp} shows that the inactive segment of the cost function denoted in green lies below the cost function denoted in red. Due to the convexity property of the cost function, it can be denoted as} 
\begin{equation}
\begin{split}
C_{nm}(i) = \max\big(\text{Segment 1, Segment 2},\\ \text{Segment 3, Segment 4}\big).
\end{split}
\label{signindependent}
\end{equation}
\vspace{-10pt}
\begin{figure}[!htbp]
	\center
	\includegraphics[width=3.4in]{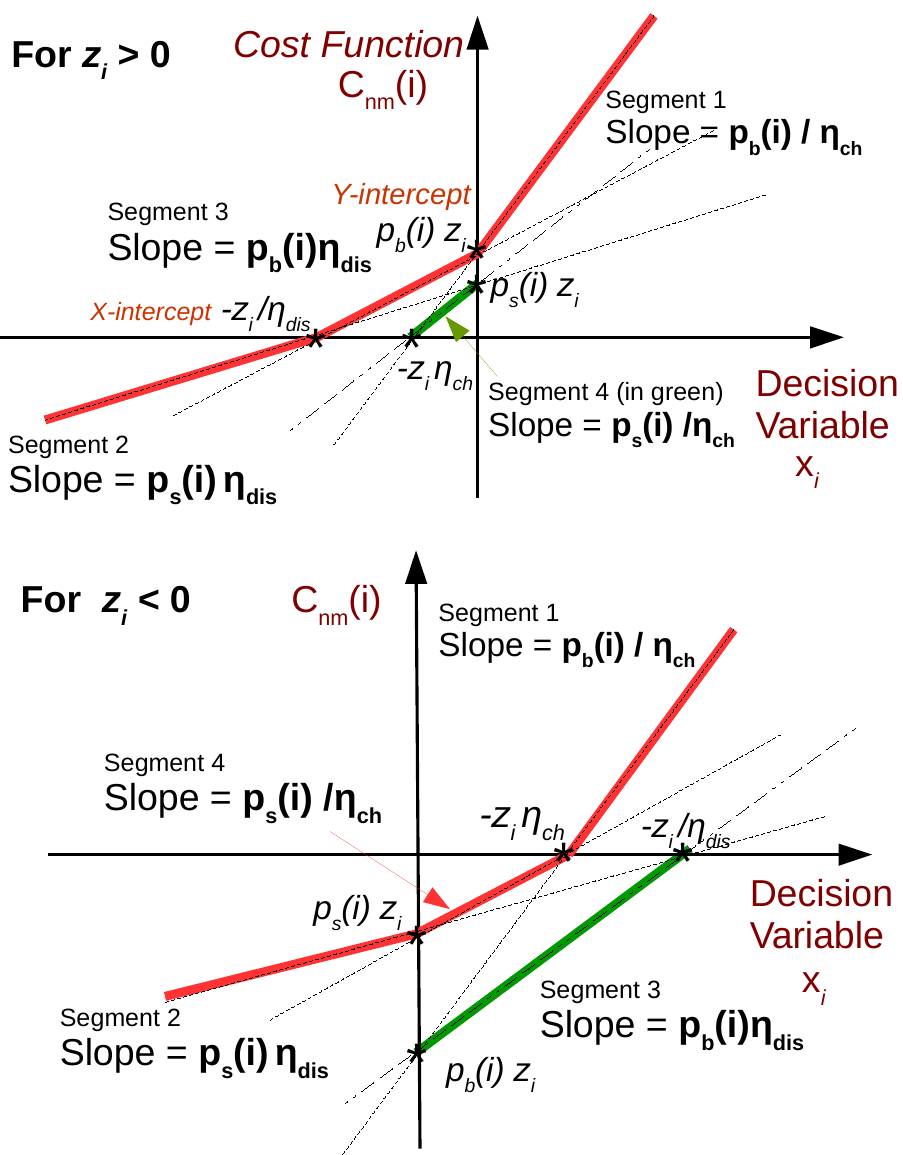} \vspace{-5pt}
	\caption{\small{The cost function segment wise for positive and negative net load $z$ \cite{hashmi2018netmetering}. The decision variable is storage change in charge level, $x_i$, and cost function, $C_{nm}(i)$ is formed with 4 unique line segments.}}\label{costlp}
\end{figure}

The epigraph based LP formulation is possible as irrespective of the sign of the load ($z$) the cost function can be represented as the maximum of the segments, shown in Eq.\ref{signindependent}.
Using the epigraph equivalent formulation for piecewise linear convex cost function we formulate the optimal arbitrage problem using linear programming, denoted as $\text{P}_{\text{LP}}$
\begin{gather*}
(\text{P}_{\text{LP}})~~\min \quad \{t_1 + t_2+...+t_N\}, \\
\text{subject to, }~~
\text{(a) Segment 1:~}\frac{p_b^i}{\eta_{ch}} x_i + z_i p_b^i  \leq t_i, ~\forall i \\
\text{(b) Segment 2:~}  {p_s^i}{\eta_{dis}} x_i + z_i p_s^i \leq  t_i, ~\forall i\\
\text{(c) Segment 3:~} {p_b^i}{\eta_{dis}} x_i + z_i p_b^i \leq  t_i, ~\forall i\\
\text{(d) Segment 4:~} \frac{p_s^i}{\eta_{ch}} x_i + z_i p_s^i \leq  t_i, ~\forall i\\
\text{(e) Ramp constraint:~} x_i \in [X_{\min}^i, X_{\max}^i], ~\forall i\\
\text{(f) Capacity constraint:~} \sum {x_i} \in [b_{\min}-b_0, b_{\max}-b_0],~ \forall i.
\end{gather*}
\vspace{-8pt}
\begin{table}[!htbp]
	\caption {\small{Cost function for storage with load under NEM}}
	\vspace{-6pt}
	\label{segproperties}
	\begin{center}
		\begin{tabular}{| c | c| c|c|}
			\hline
			Segment& Slope &  x-intercept& y-intercept \\ 
			\hline
			Segment 1 & $p_b(i)/\eta_{ch}$ &$-z_i \eta_{ch}$ & $z_ip_b(i)$ \\
			Segment 2 & $p_s(i)\eta_{dis}$ &$-z_i/ \eta_{dis}$ & $z_ip_s(i)$ \\
			Segment 3 & $p_b(i)\eta_{dis}$ &$-z_i/ \eta_{dis}$ & $z_ip_b(i)$ \\
			Segment 4 & $p_s(i)/\eta_{ch}$ &$-z_i \eta_{ch}$ & $z_ip_s(i)$ \\
			\hline
		\end{tabular}
		\hfill\
	\end{center}
\end{table}


The cost function for only lossy storage operation under NEM has two-piecewise linear segments and it is linear for equal buying and selling price of electricity with lossless battery. 
Authors in \cite{chouhan2016optimization, bradbury2014economic} present this case in their LP formulation.
This case can be obtained by simplifying the more general case depicted as $\text{P}_{\text{LP}}$ for cost function presented in Fig.~\ref{costlp}.


The LP based optimal arbitrage code described in this paper are publicly available at \texttt{\footnotesize{github.com/umar-hashmi/linearprogrammingarbitrage}}.

\section{Real-time implementation}
\label{lpsec4}
The previous section discussed optimal storage arbitrage under complete knowledge of future net loads and prices. In this section, we consider the setting where future values may be unknown. To that end, we first develop a forecast model for net load without storage (which includes inelastic consumer load and consumer distributed generation) and electricity price for future times, where the forecast is updated after each time step.
Then, we develop the forecasting model for net load with solar generation using AutoRegressive Moving Average (ARMA) model and electricity price forecast using AutoRegressive Integrated Moving Average (ARIMA).

The forecast models based on ARMA and ARIMA model developed in \cite{hashmi2019arbitrage} are used in this work.
 The forecast values are fed to a Model Predictive Control (MPC) scheme to identity the optimal modes of operation of storage for the current time-instance. 
 These steps (forecast and MPC) are repeated sequentially and highlighted in online Algorithm~\ref{algLPuncertainty}: \texttt{ForecastMPClinearProgram}.


\begin{algorithm}
	\small{\textbf{Storage Parameters}: {$\eta_{\text{ch}}, \eta_{\text{dis}}, \delta_{\max}, \delta_{\min}, b_{\max}, b_{\min}$, $b_0$}}.\\
	\small{\textbf{Inputs}: {$h, N, T,i=0 $, Rolling horizon optimization time period $ N_{\text{opt}}$, ~ Historical inelastic load, renewable generation and electricity price data}}.	
	\begin{algorithmic}[1]
		\State Use historical data to tune ARMA and ARIMA models,
		\While{$i < N$}
		\State Increment $i=i+1$,
		\State Real-time electricity price value $p_{\text{elec}}(i)$ and load $z_i$,
		\State Forecast $\hat{z}$ from time step $i+1$ to $i+ N_{\text{opt}}$ using ARMA,
		\State Forecast $\hat{p}_{b}$ and $\hat{p}_{s}$ from time $i+1$ to $i+ N_{\text{opt}}$ using ARIMA,
		\State Calculate $\hat{\kappa}$ as the ratio of $\hat{p}_{s}$ and $\hat{p}_{b}$,
		\State Build LP matrices for time step $i$ to $N$,
		\State Solve the Linear Optimization problem for forecast vectors,
		\State Calculate ${b_i}^*= b_{i-1}+\hat{x}^*(1)$,
		\State Update $b_0={b_i}^*$, the initial capacity of battery is updated.
		\State Return ${b_i}^*$, ${x_i}^*$.
		\EndWhile
	\end{algorithmic}
	\caption{\texttt{ForecastMPClinearProgram}}\label{algLPuncertainty}
\end{algorithm}

\section{Numerical Results}
\label{lpsec5}
For the numerical evaluation, we use battery parameters listed in Table~\ref{parametersBatlp}.
The performance indices used for evaluating simulations are:\\ 
$\quad \bullet$  \textit{Arbitrage Gains:} denotes the gains (in absence of load and renewable) or reduction in the cost of consumption (made in presence of load and renewable) due to storage performing energy arbitrage under time-varying electricity prices,\\
%
%
$\quad \bullet$  \textit{Cycles of operation}: In our prior work \cite{hashmi2018long} we develop a mechanism to measure the number of cycles of operation based on depth-of-discharge (DoD) of energy storage operational cycles. Equivalent cycles of 100\% DoD are identified. This index provides information about how much the battery is operated.

We use xC-yC notation to represent the relationship between ramp rate and battery capacity. xC-yC implies battery takes 1/x hours to charge and 1/y hours to discharge completely.
We perform sensitivity analysis with (a) four battery models with the different ramping capability listed in Table~\ref{parametersBatlp} and (b) 5 levels of the ratio of selling price and buying price of electricity, i.e., $\kappa\in\{1, 0.75, 0.5, 0.25, 0\}$. 
	\begin{table}[!htbp]
		\small
		\caption {Battery Parameters}
		\label{parametersBatlp}
		\vspace{-8pt}
		\begin{center}
			\begin{tabular}{| c | c|}
				\hline
				$b_{\min}, b_{\max}, b_{0}$& 200Wh, 2000 Wh, 1000 Wh\\
				\hline
				$\eta_{\text{ch}}=\eta_{\text{dis}}$ & 0.95\\
				\hline
				$\delta_{\max} = - \delta_{\min}$ & 500 W for 0.25C-0.25C,\\ (4 battery model)&1000 W for 0.5C-0.5C\\
				& 2000 W for 1C-1C, \\& 4000 W for 2C-2C\\
				\hline
			\end{tabular}
			\hfill\
		\end{center}
	\end{table}
The optimization problem, $\text{P}_{\text{LP}}$, is solved using \texttt{linprog} in MATLAB\footnote{{https://www.mathworks.com/help/optim/ug/linprog.html}}. 
\texttt{linprog} uses dual-simplex \cite{andersen1995presolving} (default) algorithm.
\subsection{Deterministic Simulations}
The price data for our simulations in this subsection is taken from NYISO \cite{nyiso}. The load and generation data is taken from data collected at Madeira, Portugal.
Fig.~\ref{determinCase} shows the electricity price and energy consumption (includes inelastic load and rooftop solar generation) data used for deterministic simulations.
\begin{figure}[!htbp]
	\center
	\includegraphics[width=3.3in]{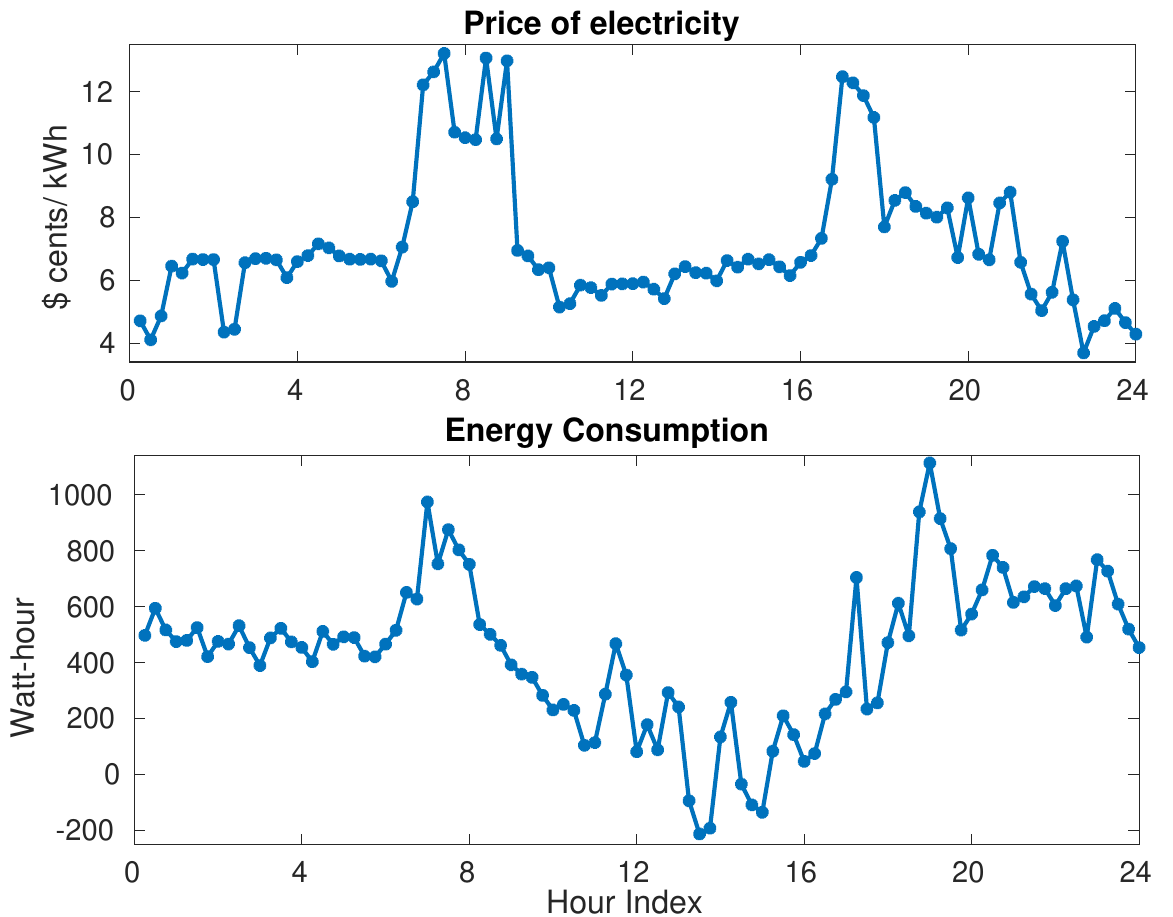}
	\vspace{-5pt}
	\caption{\small{Electricity price and consumer net load data used for deterministic simulations.}}\label{determinCase}
\end{figure}
Table~\ref{resultOnlydeterminLP} and Table~\ref{resultLoaddeterminLP} lists the energy storage arbitrage without and with energy consumption load for the electricity price data shown in Fig.~\ref{determinCase}.
The observations are:\\ 
$\quad \bullet$ The value of storage in presence of load and renewable increases as $\kappa$ decreases. Note that for $\kappa=0$, the only storage operation provides zero gain (see Table~\ref{resultOnlydeterminLP}), however, for the same buying and selling levels, the consumer would make significant gains when operated with inelastic load and renewable generation (see Table~\ref{resultLoaddeterminLP}),\\ 
$\quad \bullet$ The cycles of operation for faster ramping batteries are higher compared to slower ramping batteries. This implies that faster ramping batteries should be compared in terms of gains per cycle with slower ramping batteries. Observing only gains could be misleading.\\ 
$\quad \bullet$ As $\kappa$ decreases, the cycles of operation decrease, thus the effect on storage operation in the cases presented is similar to $\eta_{\text{fric}}$ in reducing cycles of operation.\\ 
$\quad \bullet$ Note that for $\kappa=1$, the arbitrage gains with and without load are the same. This observation is in sync with claims made in \cite{hashmi2017optimal}. Authors in \cite{hashmi2017optimal} observe that storage operation becomes independent of load and renewable variation for equal buying and selling case.
\begin{table}[!htbp]
	\small
	\caption {Performance indices for only storage}
	\label{resultOnlydeterminLP}
	\vspace{-7pt}
	\begin{center}
		\begin{tabular}{| c| c| c|c| c| }
			\hline
			$\kappa$ & 2C-2C & 1C-1C & 0.5C-0.5C & 0.25C-0.25C \\
			\hline
			\hline
			\multicolumn{5}{|c|}{Arbitrage gains in \$ cents  for 1 day} \\
			\hline
			1 & 44.445 & 33.760 & 25.636 & 17.536 \\
			0.75 & 18.842 & 17.668 & 14.077 & 9.921 \\
			0.5 & 7.682 & 7.088 & 6.253 & 5.219 \\
			0.25 & 2.513 & 2.502 & 2.483 & 2.422 \\
			0 & 0 & 0 & 0 & 0 \\
			\hline
			\multicolumn{5}{|c|}{Cycles of operation  for 1 day} \\		
			\hline
			1 & 6.586 & 3.856  & 2.237  & 1.620 \\
			0.75 & 2.401  & 1.742  &  1.484  &  0.795 \\
			0.5 &  1.539 & 1.099  & 0.714  &  0.386 \\
			0.25 &  0.182  & 0.171   & 0.164   & 0.160  \\
			0 & 0 & 0 & 0 & 0 \\
			\hline
		\end{tabular}
		\hfill\
	\end{center}
\end{table}
\begin{table}[!htbp]
	\small
	\caption {Performance indices for storage + load}
	\label{resultLoaddeterminLP}
	\vspace{-7pt}
	\begin{center}
		\begin{tabular}{| c| c| c|c| c| }
			\hline
			$\kappa$ & 2C-2C & 1C-1C & 0.5C-0.5C & 0.25C-0.25C \\
			\hline
			\hline
			\multicolumn{5}{|c|}{Arbitrage gains in \$ cents  for 1 day} \\
			\hline
			1 & 44.445 & 33.760 & 25.636 & 17.536 \\
			0.75 & 37.848 & 33.023 & 26.469 & 18.337 \\
			0.5 & 39.045 & 34.105 & 27.696 & 19.344 \\
			0.25 & 40.272 & 35.332 & 28.923 & 20.351 \\
			0 & 41.500 & 36.560 & 30.150 & 21.358 \\
			\hline
			\multicolumn{5}{|c|}{Cycles of operation  for 1 day} \\		
			\hline
			1 & 6.586 &  3.835 & 2.263 &  1.620 \\
			0.75 & 5.986  & 4.039  & 2.338 & 1.652  \\
			0.5 &  5.986  & 4.033  & 2.364 &  1.660 \\
			0.25 & 5.986  & 4.033  & 2.364  & 1.660  \\
			0 & 5.986  & 4.033  &  2.364 &  1.660 \\
			\hline
		\end{tabular}
		\hfill\
	\end{center}
\end{table}
\begin{figure}[!htbp]
	\center
	\includegraphics[width=3.3in]{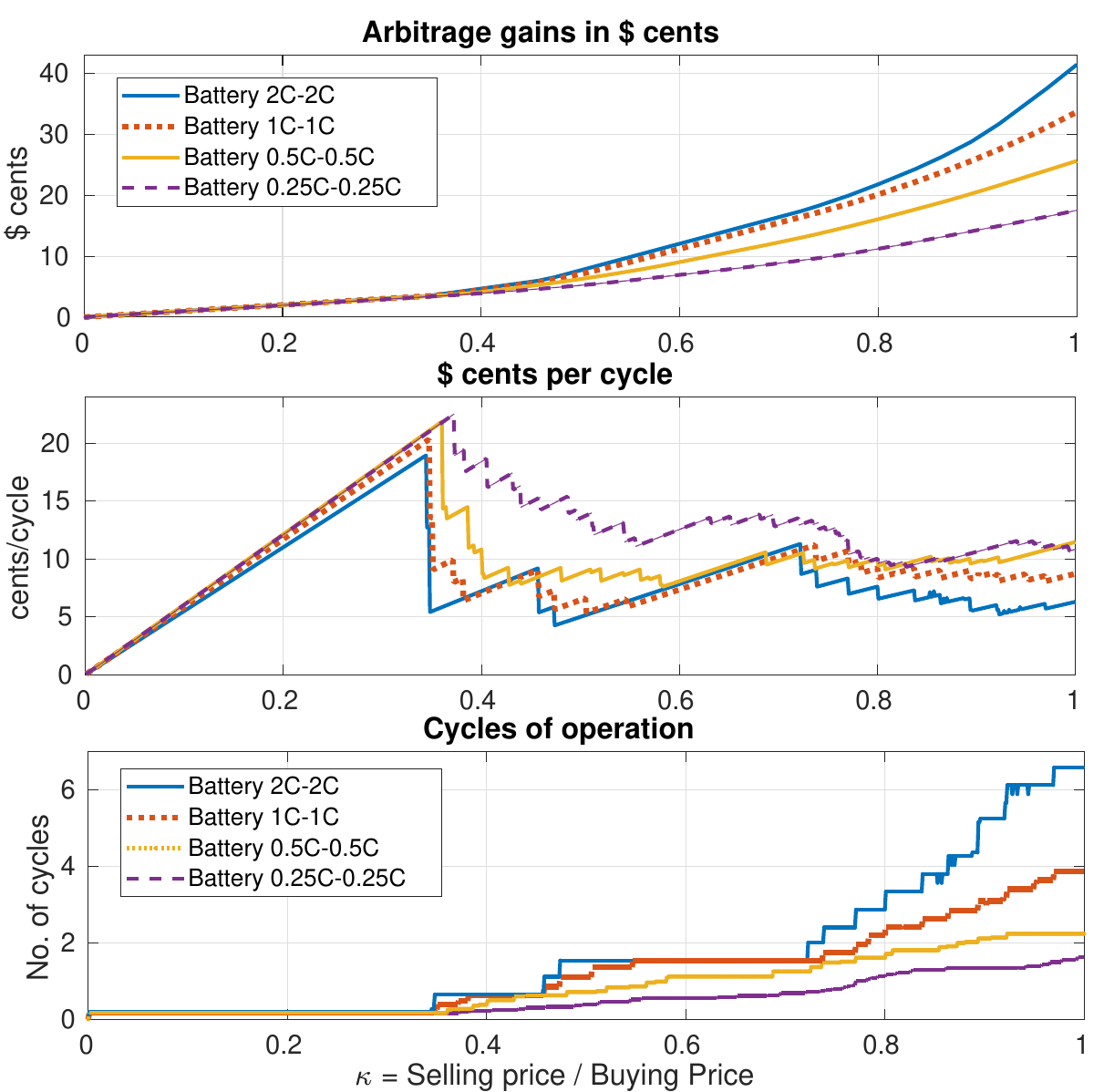}
	\vspace{-5pt}
	\caption{\small{Performance indices for only storage performing arbitrage with varying $\kappa$ for 1 day.}}\label{onlyStore}
\end{figure}
\begin{figure}[!htbp]
	\center
	\includegraphics[width=3.3in]{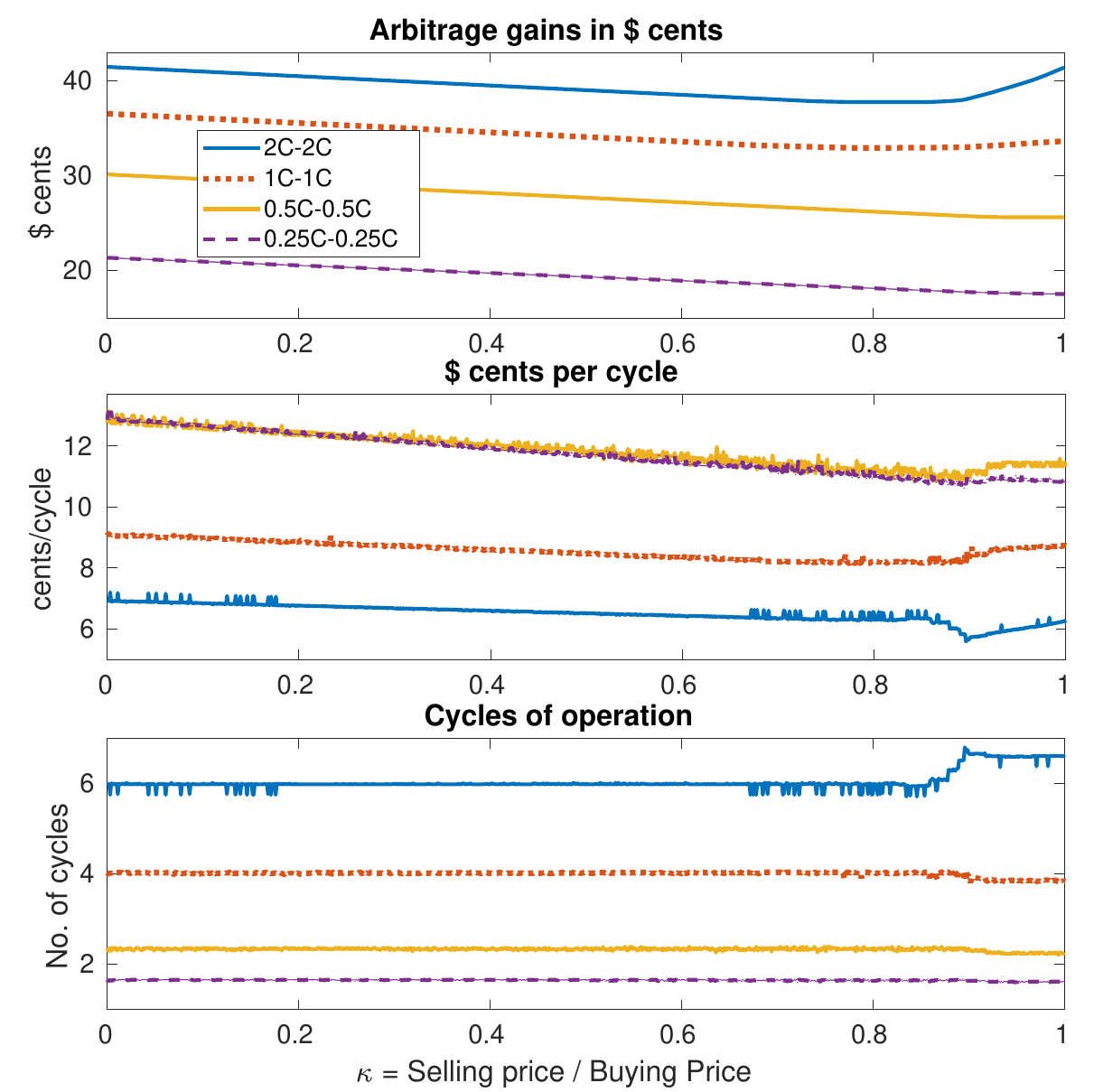}
	\vspace{-5pt}
	\caption{\small{Storage along with inelastic load and renewable generation with varying $\kappa$ for 1 day.}}\label{loadStore}
\end{figure}

Fig.~\ref{onlyStore} and Fig.~\ref{loadStore} show the arbitrage gains, gains per cycle and cycles of operation with varying $\kappa$ for storage performing arbitrage without and with inelastic load and renewable generation. The gains per cycle are nearly flat with varying $\kappa$. Slow ramping batteries, 0.25C-0.25C and 0.5C-0.5C, have significantly higher gains per cycle compared to faster ramping batteries, 1C-1C and 2C-2C.
\subsection{Results with Uncertainty}	
The forecast model is generated for load with solar generation and for electricity price.
The ARMA based forecast uses 9 weeks of data (starting from 29th May, 2019) for training and generates forecast for the next week.
\texttt{ForecastMPClinearProgram} is implemented in receding horizon.
The electricity price data used for this numerical experiment is taken from CAISO \cite{enonlinecalifornia} for the same days of load data. 
To compare the effect of forecasting net load and electricity prices with perfect information, we present average arbitrage gains and cycles of operation starting from 1st June 2019. Rolling horizon time-period of optimization, $N_{\text{opt}}$, is selected as 1 day. This implies at 13:00 h today, the storage control decisions are based on parameter variation forecasts till 13:00 h tomorrow.
\begin{table}[!htbp]
	\small
	\caption {Deterministic arbitrage gains for only storage}
	\label{deterministictOnlyLP}
	\vspace{-7pt}
	\begin{center}
		\begin{tabular}{| c| c| c|c| c| }
			\hline
			$\kappa$ & 2C-2C & 1C-1C & 0.5C-0.5C & 0.25C-0.25C \\
			\hline
			\hline
			\multicolumn{5}{|c|}{Arbitrage gains in \$ for 1 week} \\
			\hline
			1 & 9.411 & 7.059 & 4.784 & 3.065 \\
			0.75 &  5.729 & 4.491  & 3.168  & 2.082 \\
			0.5 & 3.166 & 2.550 & 1.833 & 1.217 \\
			0.25 & 1.124 &  0.941 &  0.688 & 0.456  \\
			0 & 0 & 0 & 0 & 0 \\
			\hline
			\multicolumn{5}{|c|}{Cycles of operation for 1 week} \\		
			\hline
			1 & 58.729 & 37.257 & 21.324 & 12.107 \\
			0.75 &  23.462 & 16.341  & 10.746  & 7.519 \\
			0.5 & 12.689 & 9.770 & 7.579 & 6.174 \\
			0.25 & 7.727 &  6.229 &  4.558 & 3.464  \\
			0 & 0 & 0 & 0 & 0 \\
			\hline
		\end{tabular}
		\hfill\
	\end{center}
\end{table}

\begin{table}[!htbp]
	\small
	\caption {Deterministic arbitrage gains for storage with load}
	\label{deterministictLoadLP}
	\vspace{-7pt}
	\begin{center}
		\begin{tabular}{| c| c| c|c| c| }
			\hline
			$\kappa$ & 2C-2C & 1C-1C & 0.5C-0.5C & 0.25C-0.25C \\
			\hline
			\hline
			\multicolumn{5}{|c|}{Arbitrage gains in \$ for 1 week} \\
			\hline
			1 & 9.411 & 7.059 & 4.784 & 3.065 \\
			0.75 &  7.462  & 6.269  & 4.540  & 3.025 \\
			0.5 & 6.641 & 5.987 & 4.468 &  3.019 \\
			0.25 & 6.350 & 5.904 & 4.451  &  3.019 \\
			0 & 6.313 & 5.902 & 4.451 &  3.019 \\
			\hline
			\multicolumn{5}{|c|}{Cycles of operation for 1 week} \\		
			\hline
			1 & 58.700 & 37.294 & 21.324 & 12.107 \\
			0.75 & 28.583  & 20.809  & 14.382  & 10.229 \\
			0.5 &  19.296 & 16.629 & 13.007 & 9.971  \\
			0.25 & 16.591 & 15.348 &  12.498 & 9.968   \\
			0 & 16.041 & 15.201 & 12.484 & 9.968  \\
			\hline
		\end{tabular}
		\hfill\
	\end{center}
\end{table}

\begin{table}[!htbp]
	\small
	\caption {Real-time implementation for only storage}
	\label{StochasticOnlyLP2}
	\vspace{-7pt}
	\begin{center}
		\begin{tabular}{| c| c| c|c| c| }
			\hline
			$\kappa$ & 2C-2C & 1C-1C & 0.5C-0.5C & 0.25C-0.25C \\
			\hline
			\hline
			\multicolumn{5}{|c|}{Arbitrage gains in \$ for 1 week} \\
			\hline
			1 &  6.035 & 4.684  & 3.469 & 3.000\\
			0.75 & 5.024 & 4.118 & 3.081  & 1.904\\
			0.5 & 3.004 & 2.367 & 1.692 &  1.110 \\
			0.25 & 1.067 & 0.891 &  0.618 &  0.442 \\
			\hline
			\multicolumn{5}{|c|}{Cycles of operation for 1 week} \\
			\hline
			1 & 64.323  & 38.979  & 22.622  & 12.850  \\
			0.75 & 24.870  &  16.169  & 10.570  &  7.733 \\
			0.5 & 11.393 & 8.891 & 7.013 & 6.099 \\
			0.25 & 6.429 & 5.557 & 4.359 &  3.395  \\
			\hline
		\end{tabular}
		\hfill\
	\end{center}
\end{table}

The deterministic results for without and with load are presented in Table~\ref{deterministictOnlyLP} and Table~\ref{deterministictLoadLP}. Compare the deterministic results with stochastic results presented in Table~\ref{StochasticOnlyLP2} and Table~\ref{StochasticLoadLP2}.
The primary numerical observations are:\\ 
$\quad \bullet$ Effect of uncertainty on arbitrage gains for a faster ramping battery is greater compared to a slower ramping battery, this observation is in sync with conclusions drawn in \cite{yize2018stochastic}.\\ 
$\quad \bullet$ Combining storage with inelastic load with renewable generation provides greater gains for decreasing $\kappa$. Furthermore, the effect of uncertainty for lower $\kappa$ is lower compared to higher values of $\kappa$.\\ 
$\quad \bullet$ Profitability of operating only storage deteriorates sharply with decrease of $\kappa$. For only storage case under zero selling price case ($\kappa=0$) no arbitrage would be possible and the gain remains zero.

\begin{table}[!htbp]
	\small
	\caption {Real-time implementation for storage with load}
	\label{StochasticLoadLP2}
	\vspace{-7pt}
	\begin{center}
		\begin{tabular}{| c| c| c|c| c| }
			\hline
			$\kappa$ & 2C-2C & 1C-1C & 0.5C-0.5C & 0.25C-0.25C \\
			\hline
			\hline
			\multicolumn{5}{|c|}{Arbitrage gains in \$ for 1 week} \\
			\hline
			1 &  6.034 &  4.684  & 3.496  &  3.000 \\
			0.75 & 4.827   &  4.075  &  3.400 & 2.987  \\
			0.5 & 4.168  &  3.711 & 3.292  & 2.975  \\
			0.25 &  4.204 &  3.943 &  3.348  &  3.002  \\
			0 & 4.427  & 3.896  &  3.396 &  3.009 \\
			\hline
			\multicolumn{5}{|c|}{Cycles of operation for 1 week} \\
			\hline
			1 & 64.322  & 38.979  &  22.622 &  12.850 \\
			0.75 &  41.613  & 30.322   & 19.948  & 11.980  \\
			0.5 &  34.658 & 27.627  &  18.744 &  11.348 \\
			0.25 &  31.429 &  26.370 &  18.476  &  11.396  \\
			0 & 32.958  & 28.255  & 19.845  & 11.372  \\
			\hline
		\end{tabular}
		\hfill\
	\end{center}
\end{table}
\section{Conclusion}	
\label{lpsec6}
We formulate energy storage arbitrage problem using linear programming. 
The linear programming formulation is possible due to piecewise linear convex cost functions.
In this formulation we consider: (a) net-metering compensation (with selling price at best equal to buying price) i.e. $\kappa_i \in [0,1]$, (b) inelastic load, (c) consumer renewable generation, (d) storage charging and discharging losses, (e) storage ramping constraint and (f) storage capacity constraint. 
%
By conducting extensive numerical simulations, we analyze the sensitivity of energy storage for varying ramp rates and varying ratio of selling and buying price of electricity. We observe that the value of storage in presence of load and renewable increases as the ratio of selling and buying price decreases.
We also perform real-time implementation of the proposed LP formulation and compare the deterministic results with net-load and electricity price uncertainties.
Net-load and electricity price are modeled with AutoRegressive models for model predictive control. The effect of uncertainty on slow ramping batteries is observed to be lower compared to faster ramping batteries. Furthermore, as $\kappa$ decreases, arbitrage gains becomes less sensitive to uncertainty.

In a future work, we aim to control the cycles of operation of the battery by tuning the friction coefficient with different $\kappa$ values, such that the battery is not over-used, otherwise this would lead to reduction in battery operational life.
\bibliographystyle{IEEEtran}
\bibliography{bibLP}

\vspace{10pt}

\appendix
\section{Epigraph formulation of Linear Programming}
\label{epigraphsec}
\begin{center}
	{Epigraph formulation of Linear Programming}
\end{center}

An unconstrained minimization problem of a convex piecewise-linear function, $h(x)$, could be transformed to an equivalent linear programming problem by forming the epigraph problem \cite{boyd2004convex}, \cite{piecewise}. 
Consider the convex piecewise cost function minimization problem is denoted as
$
(P_{org}) \quad \min h(x),
$
where $h(x) = \max_{i=1,...,m}(a_i^Tx + b_i)$. For cases where the decision variable $x$ is scaler, $a_i^T$ is also a scaler. Thus, $a_ix + b_i$ is a two-dimensional line with $b_i$ denoting the y-intercept and $a_i$ the slope of the line. 
The equivalent epigraph problem for the original problem $P_{org}$ is denoted as
$
(P_{epi}) \quad \min t,~
\text{ subject to,  } a_ix + b_i \leq t, \quad i =1,...,m,
$~
where $t$ denotes auxiliary scalar variable.
The LP matrix notation for the optimization problem $P_{epi}$ is represented as: minimize
$\tilde{f}^T \tilde{x}$, subject to $\tilde{A}\tilde{x}\leq \tilde{b}$; where

\begin{gather*}
\tilde{f}={\begin{bmatrix}
	0\\
	1\\
	\end{bmatrix}},
\quad
\tilde{x} = {\begin{bmatrix}
	x\\
	t\\
	\end{bmatrix}},
\quad
\tilde{A} = {\begin{bmatrix}
	a_1 & -1\\
	: & :\\
	a_m & -1\\
	\end{bmatrix}},
\quad
\tilde{b} = {\begin{bmatrix}
	-b_1 \\
	: \\
	-b_m \\
	\end{bmatrix}}.
\end{gather*}

Now consider extending this minimization problem for two time instants with a unique cost function for each time instant. The optimization problem is denoted as
$
(P_{epi}) \quad \min \quad t_1+t_2, 
\text{ s.t.,  (i) } a_{1i}x + b_{1i} \leq t_1,
\text{ (ii) } a_{2i}x + b_{2i} \leq t_2,
~ i =1,...,m,
$
The equivalent LP matrices are denoted as

\begin{gather*}
\tilde{f}\text{=}{\begin{bmatrix}
	0\\
	0\\
	1\\
	1\\
	\end{bmatrix}},
\tilde{x} \text{=} {\begin{bmatrix}
	x_1\\
	x_2\\
	t_1\\
	t_2\\
	\end{bmatrix}},
\tilde{A} \text{=} {\begin{bmatrix}
	a_{11} & 0 &-1 & 0\\
	: & : & : & :\\
	a_{1m} & 0 &-1 & 0\\
	0 & a_{21} & 0 &-1 \\
	: & : & : & :\\
	0 &a_{1m} & 0 &-1\\
	\end{bmatrix}},
\tilde{b} \text{=} {\begin{bmatrix}
	-b_{11} \\
	: \\
	-b_{1m} \\
	-b_{21} \\
	: \\
	-b_{2m} \\
	\end{bmatrix}}.
\end{gather*}

A similar LP formulation for N time steps with piecewise linear cost function can be formulated.

\section*{Acknowledgement}
The numerical results use the Madeira electricity consumer data 
collected under the framework of the H2020 SMILE project (GA 731249).
We would like to thank Dr Lucas Pereira for providing the data.	
	
\end{document}